\documentclass[conference]{IEEEtran}
\usepackage{cite,graphicx,amssymb,amsmath,color,textcomp}
\usepackage{amsmath}   
\usepackage{amssymb}   
\usepackage{extarrows,multirow,multicol}
\usepackage{bm}
\usepackage{float}
\usepackage{subfig}

\newtheorem{remark}{\underline{Remark}}
\newtheorem{lemma}{Lemma}

\newtheorem{theorem}{Theorem}

\newlength{\figwidth}
\IEEEoverridecommandlockouts

\begin{document}

\title{Standard Condition Number–Based Detection for MIMO ISAC Systems under Noise Uncertainty
\vspace{-0.2cm}
}
\author{
 \IEEEauthorblockN{Alex Obando, Tharindu Udupitiya,  
 Saman Atapattu and Kandeepan Sithamparanathan\\  
 }
 \IEEEauthorblockA{
Department of Electrical and Electronic Engineering, RMIT University, Melbourne, Victoria, Australia\\
\IEEEauthorblockA{\{alex.daniel.obando,\,tharindu.udupitiya,\,saman.atapattu,\,kandeepan.sithamparanathan\}@rmit.edu.au
}
}
\vspace{-0.5cm}

}

\maketitle


\begin{abstract}
This paper presents a unified analytical and optimization framework for \emph{Standard Condition Number} (SCN)–based detection in MIMO Integrated Sensing and Communication (ISAC) systems operating under noise uncertainty.  
Conventional detectors such as the Likelihood Ratio Test (LRT) and Energy Detector (ED) suffer from false-alarm inflation when interference or jamming alters the noise covariance.  
To overcome this limitation, the SCN detector, defined as the ratio of the largest to smallest eigenvalues of the sample covariance matrix is analytically characterized for the first time in an ISAC setting.  
Closed-form expressions for the false-alarm and detection probabilities are derived using random matrix theory for a two-antenna sensing receiver and generalized to arbitrary MIMO dimensions.  
The analysis proves that the SCN maintains a \emph{constant false alarm rate} (CFAR) property and remains resilient to covariance mismatch, providing theoretical justification for its robustness in dynamic environments.  
Leveraging these results, a tractable ISAC power-allocation problem is formulated to minimize total detection error subject to communication rate and power constraints, yielding an interpretable sequential solution.  
Numerical evaluations verify the theory and demonstrate that the proposed SCN detector consistently outperforms LRT and eigenvalue-based benchmarks, particularly under strong interference and jamming typical of modern multiuser networks.
\end{abstract}

\begin{IEEEkeywords}
Integrated sensing and communication (ISAC), standard condition number (SCN), covariance mismatch, constant false alarm rate (CFAR), random matrix theory (RMT).
\end{IEEEkeywords}

\vspace{-6mm}

\section{Introduction} \label{S1}
\vspace{-1mm}

Integrated sensing and communication (ISAC) has emerged as a core paradigm for next-generation wireless networks, enabling joint radar sensing and data transmission over a shared spectrum~\cite{zhang2021isac,cheng2023coordinated}.  
By leveraging the dual functionality of radio signals, ISAC enables positioning, autonomous driving, and environmental monitoring.~\cite{liang2022joint}.  
Modern MIMO ISAC systems use dual precoders and unified waveforms to balance sensing accuracy and communication throughput via dynamic power allocation.~\cite{9724174,10938928,8239836}.  
Existing approaches, including sensing-embedded communication, communication-embedded radar, and joint co-design, present trade-offs between complexity and flexibility.~\cite{liang2022joint,Perera2025opportunistic}.  
However, robust ISAC operation under dynamic interference and noise uncertainty remains challenging, motivating the need for statistically resilient detection and power-allocation frameworks.

Thus, signal detection is fundamental to reliable target identification and situational awareness in ISAC networks.  
Conventional detectors such as the likelihood ratio test (LRT), generalized LRT (GLRT), and energy detector (ED) rely on accurate knowledge of signal and noise statistics~\cite{an2023TWC,lou2024power,zhao2024dual}, which is rarely available in dynamic or interference-limited environments.  
Eigenvalue-based detectors including the maximum eigenvalue, trace-to-maximum ratio, and condition-number tests address this limitation by exploiting the covariance structure of the received data, enabling blind detection without explicit noise-power estimation~\cite{7903598,Chamain,debbah2011,nafkaSCN2020}.  
However, their integration into ISAC frameworks remains largely unexplored~\cite{obando2025eigenvalue}, particularly under time-varying noise and jamming conditions.

A major limitation of existing detectors is the assumption that estimated parameters such as noise covariance or channel statistics remain constant during the sensing process~\cite{an2023TWC,lou2024power,zhao2024dual,7903598,Chamain,debbah2011,nafkaSCN2020}.  
In practice, these parameters fluctuate due to jamming, interference, and hardware impairments, rendering conventional detectors highly sensitive to uncertainty \cite{udupitiya2024}.  
Although recent works have introduced bounded CSI error models and robust waveform optimization under statistical or semi-infinite constraints, noise uncertainty particularly covariance mismatch arising in non-stationary environments remains largely unaddressed~\cite{besson2016generalized,werner2007,johnstone2020,10619207}.  
The \emph{Standard Condition Number} (SCN) detector, defined as the ratio of the largest to smallest eigenvalues of the covariance matrix, inherently exhibits scale-invariant behavior~\cite{nafkaSCN2020,udupitiya2024}, making it naturally robust to unknown noise power.  
Hence, extending SCN-based detection to ISAC systems under noise uncertainty constitutes a critical and unexplored research.

This paper fills this critical gap by presenting the \emph{first rigorous analysis of Standard Condition Number (SCN)–based detection in MIMO ISAC systems} operating under noise uncertainty.  
The proposed framework integrates theoretical derivations with practical ISAC design insights and is broadly applicable to general detection problems beyond ISAC.  
The key contributions are summarized as follows:  
(i) closed-form analytical expressions are derived for the SCN detection probability under unknown and time-varying noise covariance;  
(ii) a unified optimization framework is formulated to allocate transmit power between communication and sensing functions for robust ISAC operation under uncertainty; and  
(iii) the impact of noise mismatch on the sensing–communication tradeoff is analyzed, revealing that SCN preserves the constant false alarm rate (CFAR) property and achieves superior reliability compared with conventional LRT and maximum-eigenvalue detectors.  
To the best of our knowledge, this is the first comprehensive theoretical and practical treatment of SCN detection for MIMO ISAC, establishing a foundation for robust spectrum sensing and joint communication–sensing design in next-generation wireless networks.

\begin{figure*}[t!]
    \centering
    \subfloat[Noise-only observation for covariance estimation.\label{fig_sys_mod_estim}]{
        \includegraphics[width=0.31\textwidth]{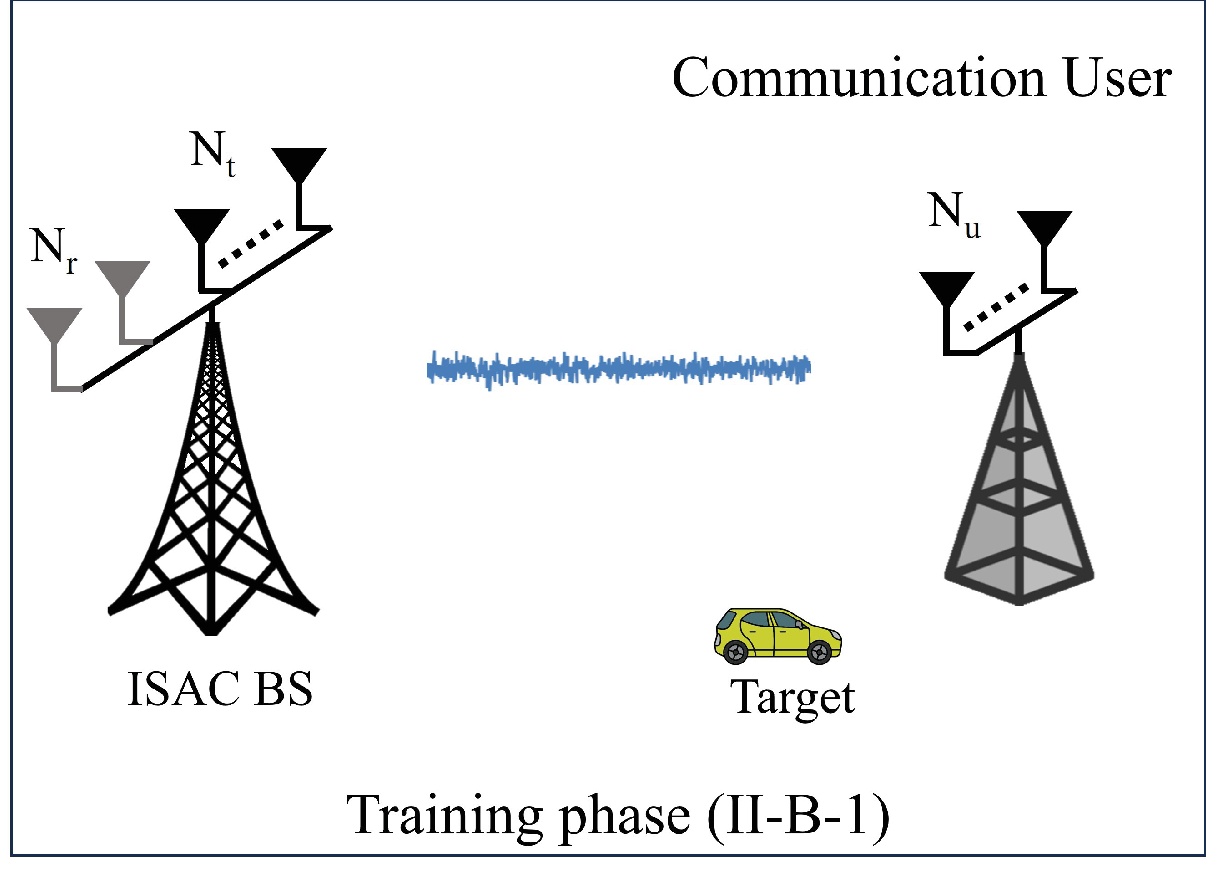}
    }\hfill
    \subfloat[Joint ISAC signal reception.\label{fig_sys_sens}]{
        \includegraphics[width=0.31\textwidth]{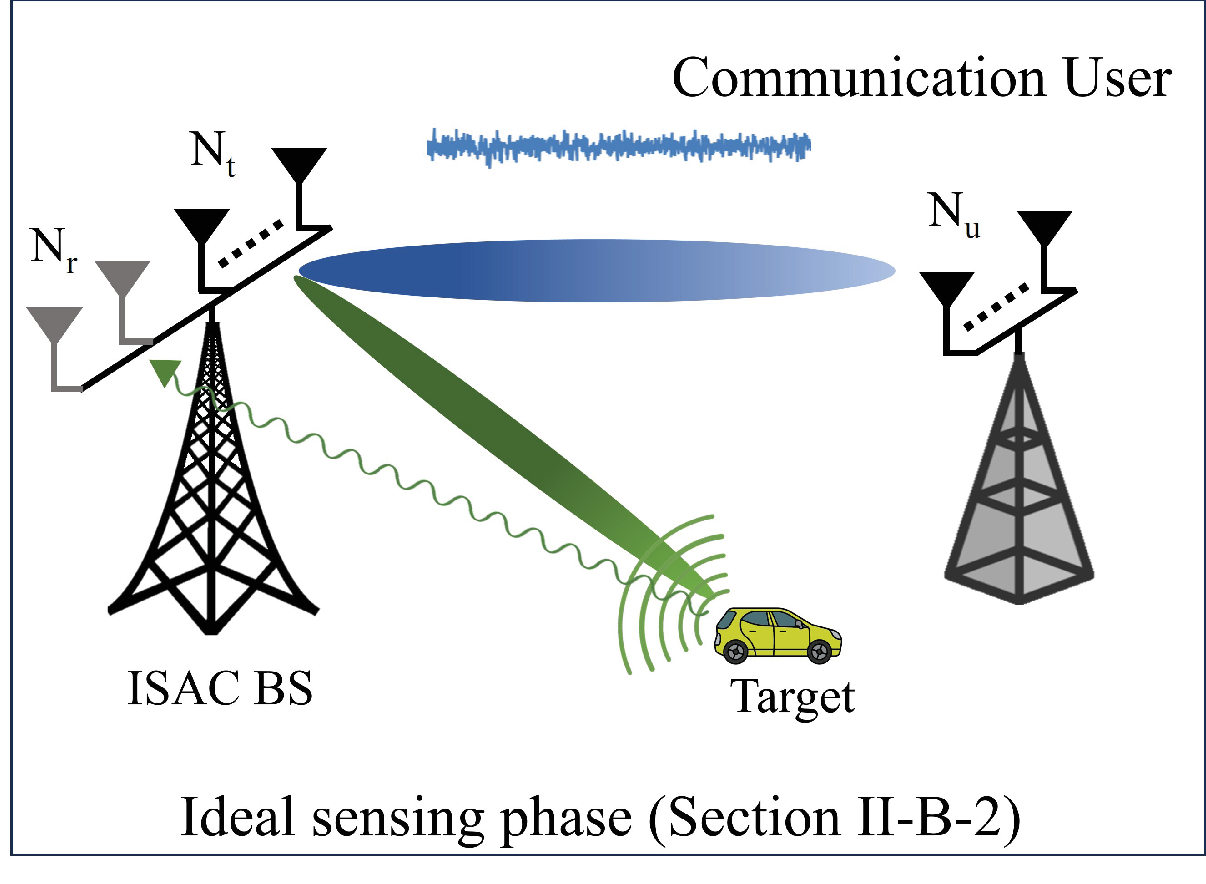}
    }\hfill
    \subfloat[Sensing under interference/jamming.\label{fig_sys_jam}]{
        \includegraphics[width=0.31\textwidth]{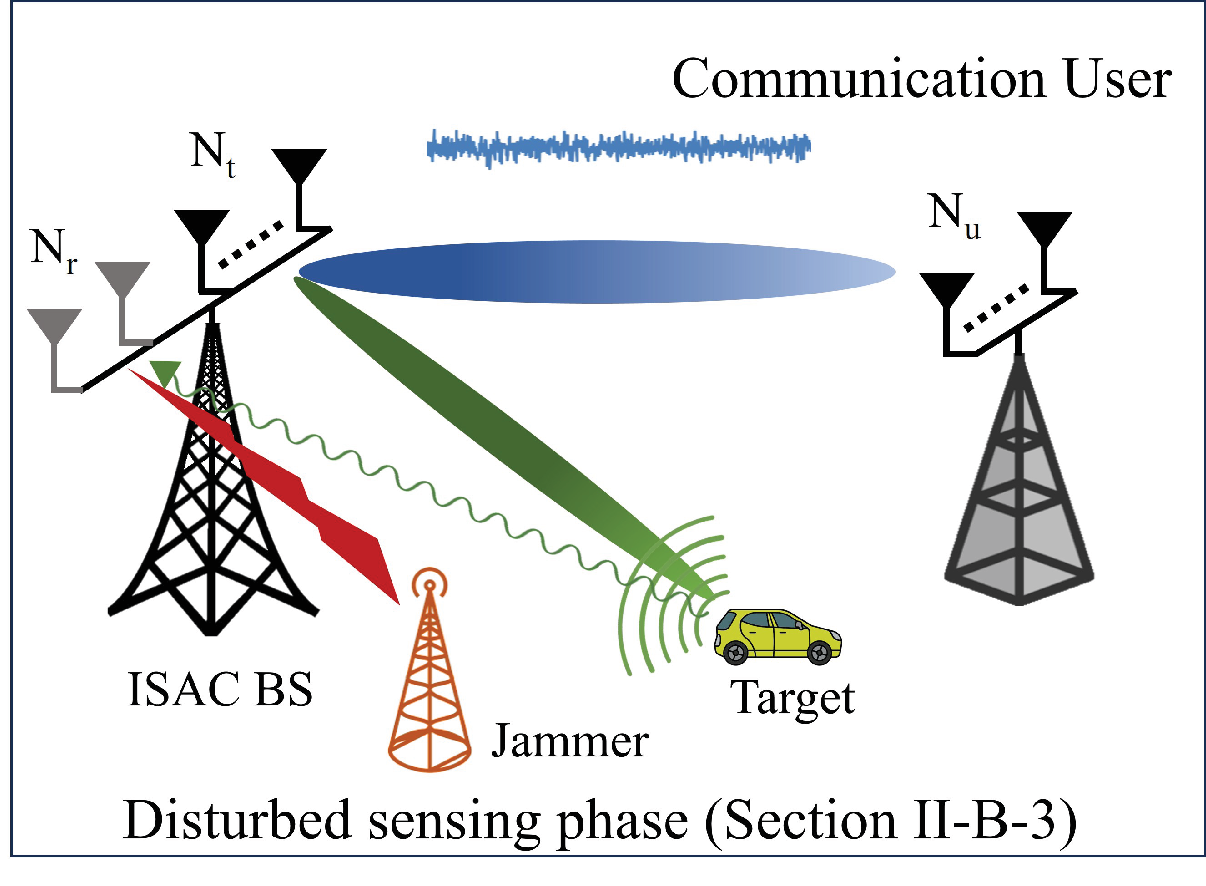}
    }
    \caption{The proposed ISAC framework showing (a) training, (b) ideal sensing, and (c) disturbed sensing under noise uncertainty.}
    \vspace{-4mm}
\end{figure*}



\section{Network and Signal Model}\label{net_mod}
We consider a monostatic ISAC system where a MIMO base station (BS) equipped with $N_t$ transmit  antennas serves a single multi-antenna communication user (CU) with \(N_u\) receive antennas while simultaneously sensing a target using an \(N_r\)-antenna receive array.  
A dual-waveform transmission strategy is adopted, in which dedicated waveforms are assigned for communication and sensing. 

\subsection{Communication Model}
At time slot $\ell$, the BS transmits a superposition of communication and sensing signals. Let $\mathbf{s}_\mathrm{c}[\ell]\in\mathbb{C}^{N_u\times1}$ denotes the communication vector and $s_\mathrm{s}[\ell]\!\in\!\mathbb{C}$ denotes the sensing symbol, satisfying $\mathbb{E}[\mathbf{s}_\mathrm{c}[\ell]\mathbf{s}_\mathrm{c}^H[\ell]]=\mathbf{I}_{N_u}$, $\mathbb{E}[|s_\mathrm{s}[\ell]|^2]=1$, and $\mathbb{E}[s_\mathrm{c}[\ell]s_\mathrm{s}^*[\ell]]=0$. The corresponding precoders are $\mathbf{W}_\mathrm{c}\!\in\!\mathbb{C}^{N_t\times N_u}$ and $\mathbf{w}_\mathrm{s}\!\in\!\mathbb{C}^{N_t\times1}$. The transmitted signal at time $\ell$ is given by \(\mathbf{x}[\ell]
=\mathbf{W}_\mathrm{c}\mathbf{s}_\mathrm{c}[\ell]
+\mathbf{w}_\mathrm{s}s_\mathrm{s}[\ell].\)
Over $L$ time slots, the transmit matrix becomes
\begin{equation}\label{eq:tx_signal}
\mathbf{X}
=\mathbf{W}_\mathrm{c}\mathbf{S}_\mathrm{c}
+\mathbf{w}_\mathrm{s}\mathbf{s}_\mathrm{s}
\in\mathbb{C}^{N_t\times L},
\end{equation}
where $\mathbf{S}_\mathrm{c}=[\mathbf{s}_\mathrm{c}[1],\ldots,\mathbf{s}_\mathrm{c}[L]]\in \mathbb{C}^{N_u \times L}$ and $\mathbf{s}_\mathrm{s}=[s_\mathrm{s}[1],\ldots,s_\mathrm{s}[L]] \in \mathbb{C}^{1 \times L}$. The sensing waveform $\mathbf{s}_\mathrm{s}$ is assumed to be known at both the BS and the CU receiver, enabling interference cancellation. 

\subsubsection*{Received Communication Signal}
The received baseband signal at the CU with $N_u$ antennas is \(\mathbf{y}_\mathrm{c}[\ell]=\mathbf{H}_\mathrm{c}\mathbf{W}_\mathrm{c}\mathbf{s}_\mathrm{c}[\ell]
+\mathbf{H}_\mathrm{c}\mathbf{w}_\mathrm{s}s_\mathrm{s}[\ell]
+\mathbf{n}_\mathrm{c}[\ell],\)
where $\mathbf{H}_\mathrm{c}\!\in\!\mathbb{C}^{N_u\times N_t}$ represents the BS–CU MIMO channel with i.i.d.\ entries following $\mathcal{CN}(0,\sigma_h^2)$, and $\mathbf{n}_\mathrm{c}[\ell]\!\sim\!\mathcal{CN}(\mathbf{0},\sigma_c^2\mathbf{I}_{N_u})$ denotes the additive white Gaussian noise (AWGN) vector. Since the sensing waveform is known to the CU, the interference term $\mathbf{H}_\mathrm{c}\mathbf{w}_\mathrm{s}s_\mathrm{s}[\ell]$ can be mitigated at the receiver.

The instantaneous signal-to-noise ratio (SNR) for the CU becomes \(\mathrm{SNR}_{\mathrm{c}}
=\|\mathbf{H}_\mathrm{c}\mathbf{W}_\mathrm{c}\|^2_F/\sigma_c^2.\) Since the BS possesses only statistical channel state information (CSIT),
the achievable \emph{ergodic rate} is
\begin{equation}\label{eq:ergodic_rate}
\bar R
= \mathbb{E}_{\mathbf{H}_{\mathrm{c}}}\!\left[
  \log_2\!\left(1 + \tfrac{\|\mathbf{H}_\mathrm{c}\mathbf{W}_\mathrm{c}\|^2}{\sigma_c^2}\right)
\right]= \frac{1}{\ln 2} \, e^{1/\rho} 
    \sum_{m=1}^{N_u} E_m\left(\frac{1}{\rho}\right)
\end{equation}

where \(\rho = \sigma_h^2\|\mathbf{W}_\mathrm{c}\|^2/\sigma_c^2\) and \(E_n(\cdot)\) is the exponential integral of order $n$.
This ergodic formulation aligns with the statistical-CSIT assumption and characterizes the long-term communication throughput.

\subsection{Sensing Model}

In the considered ISAC system, the BS simultaneously transmits communication and sensing waveforms while receiving target echoes through its sensing array.  
The overall sensing process comprises three operational phases: training, ideal sensing, and disturbed sensing.

\subsubsection{Training Phase}
During a silent interval, the BS transmits pilots to collect $L$ noise-only snapshots used for covariance estimation using received signal 
\(
\mathbf{y}_s[\ell] = \mathbf{n}_s[\ell]\), with 
\(\mathbf{n}_s[\ell] \sim \mathcal{CN}_{N_r}(\mathbf{0}, \sigma_s^2\mathbf{I}_{N_r}),\) for \(\ell=1,\ldots,L.\)
These samples yield an estimate of the nominal noise covariance $\sigma_s^2\mathbf{I}_{N_r}$ for subsequent sensing.

\subsubsection{Ideal Sensing Phase}
During active sensing, the BS transmits $\mathbf{x}[\ell]$ and receives 
\begin{equation}
\mathbf{y}_{\mathrm{d}}[\ell]=
\begin{cases}
\mathbf{n}_s[\ell], & \mathcal{H}_0,\\[2pt]
\mathbf{G}\mathbf{x}[\ell]+\mathbf{n}_s[\ell], & \mathcal{H}_1,
\end{cases}
\label{eq:idel_sensing_phase}
\end{equation}
where $\mathbf{G}\!\in\!\mathbb{C}^{N_r\times N_t}$ denotes the composite BS–target–BS channel and $\mathbf{n}_s[\ell]$ retains the same covariance as in training.  
This represents the \emph{ideal} stationary case where training and sensing share identical noise statistics.

\subsubsection{Disturbed Sensing Phase}
In practice, external interference or intentional jamming alters the sensing covariance.  
Let $\mathbf{j}[\ell]\!\sim\!\mathcal{CN}_{N_r}(\mathbf{0},\sigma_j^2\mathbf{I}_{N_r})$ denote the disturbance term.  
The received signal is then
\begin{equation}
\mathbf{y}_{\mathrm{d}}[\ell]=
\begin{cases}
\mathbf{n}_s[\ell]+\mathbf{j}[\ell], & \mathcal{H}_0,\\[2pt]
\mathbf{G}\mathbf{x}[\ell]+\mathbf{n}_s[\ell]+\mathbf{j}[\ell], & \mathcal{H}_1,
\end{cases}
\label{eq:disturbed_sensing_phase}
\end{equation}
where $\sigma_j^2=\mu'\sigma_s^2$ and $\mu=1+\mu'$ is the \textit{covariance mismatch factor}.  
The effective sensing-phase covariance becomes $\mu\sigma_s^2\mathbf{I}_{N_r}$. 
Stacking $L$ snapshots gives
\begin{equation}
\mathbf{Y}_\mathrm{d}=\mathbf{G}\mathbf{X}+\mathbf{N}+\mathbf{J}
=\mathbf{G}\mathbf{W}\mathbf{S}+\mathbf{N}+\mathbf{J},
\label{eq:sensing_matrix_model}
\end{equation}
where $\mathbf{N}$ and $\mathbf{J}$ are the stacked noise and interference matrices, and $\mathbf{W}=[\mathbf{W}_\mathrm{c},\mathbf{w}_\mathrm{s}]$, $\mathbf{S}=[\mathbf{S}_\mathrm{c}^T,\mathbf{s}_\mathrm{s}^T]^T$ describe the joint transmit structure. 
The target response is modeled as $\mathbf{G}=\beta\,\mathbf{a}(\theta)\mathbf{b}^H(\theta)$, where $\beta\!\in\!\mathbb{C}$ represents the complex reflection coefficient, and $\mathbf{a}(\theta)$ and $\mathbf{b}(\theta)$ are the receive and transmit steering vectors.  
For ULAs with half-wavelength spacing,
\(
\mathbf{a}(\theta)=[1,e^{-j\pi\sin\theta},\ldots,e^{-j\pi(N_r-1)\sin\theta}]^T\) and \(
\mathbf{b}(\theta)=[1,e^{-j\pi\sin\theta},\ldots,e^{-j\pi(N_t-1)\sin\theta}]^T.
\)
\vspace{0.2cm}
\begin{remark}[Novelty of Unified Sensing Framework]
Unlike conventional ISAC models that assume known or stationary noise statistics~\cite{7903598,an2023TWC,lou2024power,zhao2024dual,Chamain,debbah2011,nafkaSCN2020}, this work establishes a three-phase formulation as training, ideal, and disturbed, linking covariance estimation and robustness analysis within a single unified framework.  
To the best of our knowledge, this is the first ISAC formulation to explicitly model both matched ($\mu=1$) and mismatched ($\mu\neq1$) sensing regimes.
\end{remark}
\vspace{0.2cm}
\subsubsection{Motivation for SCN-Based Detection} 
Conventional detectors such as the LRT and $\lambda_{\max}$ test rely on precise noise knowledge, so even mild mismatches bias thresholds, increase false alarms, and degrade sensing reliability. These detectors also lose invariance to nuisance parameters, violating the CFAR property essential for reliable ISAC operation.  
To address this, the \textit{Standard Condition Number (SCN)} detector employs a ratio-based statistic between maximum and minimum eigenvalues, maintaining scale invariance and mitigating noise uncertainty to ensure stable sensing–communication coexistence.
\vspace{0.2cm}
\section{Standard Condition Number (SCN) Detector}

Eigenvalue-based detection relies on the spectral characteristics of the received covariance matrix.  
Under the disturbed sensing model in~\eqref{eq:sensing_matrix_model}, the population-level covariance is
\begin{align}
\boldsymbol{\Sigma}_\mathrm{d} &=
\begin{cases}
\mu\sigma_s^2\mathbf{I}_{N_r}, & \mathcal{H}_0,\\[2pt]
\mathbf{G}\mathbf{W}\mathbf{W}^H\mathbf{G}^H + \mu\sigma_s^2\mathbf{I}_{N_r}, & \mathcal{H}_1,
\end{cases}
\label{eq:covariance_model}
\end{align}
where $\mu$ is the \textit{covariance–mismatch factor}, and $\mu=1$ corresponds to the ideal sensing condition as in \eqref{eq:idel_sensing_phase}.  
The sample covariance estimated from the received data is
\begin{equation}
\widehat{\boldsymbol{\Sigma}}
= \frac{1}{L}\mathbf{Y}_{\mathrm{d}}\mathbf{Y}_{\mathrm{d}}^H
= \frac{1}{L}\sum_{\ell=1}^{L}\mathbf{y}_{\mathrm{d}}[\ell]\mathbf{y}_{\mathrm{d}}^H[\ell],
\label{eq:sample_covariance}
\end{equation}
which asymptotically approaches~\eqref{eq:covariance_model} as $L \to \infty$. 

The SCN detector exploits the eigenvalue spread of $\widehat{\boldsymbol{\Sigma}}$ to measure deviations from spatially white noise.  
The test statistic is defined as
\begin{equation}
\kappa(\widehat{\boldsymbol{\Sigma}})
\triangleq \frac{\lambda_{\max}(\widehat{\boldsymbol{\Sigma}})}{\lambda_{\min}(\widehat{\boldsymbol{\Sigma}})}.
\label{eq:scn_def}
\end{equation}
Under $\mathcal{H}_0$, all eigenvalues are approximately equal, yielding $\kappa\!\approx\!1$, whereas under $\mathcal{H}_1$, the signal subspace induced by $\mathbf{G}\mathbf{W}$ inflates the dominant eigenvalue, producing $\kappa\!>\!1$.

Statistically, the sample covariance follows
\begin{equation}
L\,\widehat{\boldsymbol{\Sigma}} \sim
\begin{cases}
\mathcal{CW}_{N_r}(L,\,\mu\sigma_s^2\mathbf{I}_{N_r}), & \mathcal{H}_0,\\[2pt]
\mathcal{CW}_{N_r}\!\left(L,\,\mu\sigma_s^2\mathbf{I}_{N_r},\,\mathbf{G}\mathbf{W}\mathbf{W}^H\mathbf{G}^H\right), & \mathcal{H}_1,
\end{cases}
\label{eq:scn_cov_model}
\end{equation}
where $\mathcal{CW}_{N_r}(L,\mathbf{\Sigma},\mathbf{\Theta})$ denotes a complex non-central Wishart distribution with $L$ degrees of freedom, scale matrix $\mathbf{\Sigma}$, and non-centrality parameter $\mathbf{\Theta}$.  


\begin{lemma}[CFAR Invariance of SCN]\label{lemma:1}
Under $\mathcal{H}_0$, scaling the covariance by any factor $\mu>0$ uniformly scales all eigenvalues, 
\(\lambda_i^{(\mu)} = \mu\,\lambda_i^{(1)},\, i=1,\ldots,N_r.\)  
Hence,
\begin{equation}
\kappa^{(\mu)}(\widehat{\boldsymbol{\Sigma}})
={\lambda_{\max}^{(\mu)}}/{\lambda_{\min}^{(\mu)}}
={\lambda_{\max}^{(1)}}/{\lambda_{\min}^{(1)}}
=\kappa^{(1)}(\widehat{\boldsymbol{\Sigma}}),
\end{equation}
showing that the SCN statistic is invariant to any global covariance scaling.  
Consequently, the SCN detector inherently achieves the \emph{Constant False Alarm Rate (CFAR)} property with respect to noise or interference power variations.
\end{lemma}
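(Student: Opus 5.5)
The plan is to make the scaling argument rigorous at the level of random matrices rather than only population covariances. Under $\mathcal{H}_0$ in \eqref{eq:disturbed_sensing_phase}, the snapshots $\mathbf{y}_{\mathrm{d}}[\ell]=\mathbf{n}_s[\ell]+\mathbf{j}[\ell]$ are i.i.d.\ $\mathcal{CN}_{N_r}(\mathbf{0},\mu\sigma_s^2\mathbf{I}_{N_r})$, because $\mathbf{n}_s$ and $\mathbf{j}$ are independent zero-mean white Gaussians. Hence we can write $\mathbf{Y}_{\mathrm{d}}\overset{d}{=}\sqrt{\mu}\,\mathbf{Z}$, where $\mathbf{Z}$ has i.i.d.\ $\mathcal{CN}(0,\sigma_s^2)$ entries; $\mathbf{Z}$ is exactly the matched ($\mu=1$) data matrix. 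Substituting into \eqref{eq:sample_covariance} gives
\[
\widehat{\boldsymbol{\Sigma}}^{(\mu)}\overset{d}{=}\mu\,\widehat{\boldsymbol{\Sigma}}^{(1)}, \qquad \widehat{\boldsymbol{\Sigma}}^{(1)}=\tfrac{1}{L}\mathbf{Z}\mathbf{Z}^H .
\]
Equivalently, at the Wishart level of \eqref{eq:scn_cov_model}, $\mathcal{CW}_{N_r}(L,\mu\sigma_s^2\mathbf{I})$ is the law of $\mu$ times a $\mathcal{CW}_{N_r}(L,\sigma_s^2\mathbf{I})$ matrix.

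Next, I would use the deterministic fact that for a Hermitian matrix $\mathbf{A}$ and scalar $\mu>0$, $\det(\mu\mathbf{A}-\lambda\mathbf{I})=\mu^{N_r}\det(\mathbf{A}-(\lambda/\mu)\mathbf{I})$. The eigenvalues of $\mu\mathbf{A}$ are therefore $\mu\lambda_i(\mathbf{A})$, and since $\mu>0$ the ordering is preserved. So $\lambda_{\max}$ maps to $\mu\lambda_{\max}$ and $\lambda_{\min}$ maps to $\mu\lambda_{\min}$, giving $\lambda_i^{(\mu)}=\mu\lambda_i^{(1)}$ realization-wise under the coupling $\mathbf{Y}_{\mathrm{d}}=\sqrt{\mu}\mathbf{Z}$. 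Dividing, $\mu$ cancels in \eqref{eq:scn_def}, provided $\lambda_{\min}^{(1)}>0$.

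The one nontrivial technical point is this denominator. I would note that when $L\ge N_r$, the matrix $\mathbf{Z}\mathbf{Z}^H$ is almost surely positive definite, because a Gaussian matrix is almost surely of full rank. Thus $\kappa$ is well defined almost surely. If $L<N_r$, $\kappa=\infty$ for every $\mu$ and the invariance holds trivially, but I would simply assume $L\ge N_r$, as is standard.

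Finally, for CFAR: for any threshold $\gamma$, the invariance gives
\[
P_{\mathrm{fa}}(\gamma;\mu)=\Pr\big(\kappa(\widehat{\boldsymbol{\Sigma}}^{(\mu)})>\gamma\mid\mathcal{H}_0\big)=\Pr\big(\kappa(\widehat{\boldsymbol{\Sigma}}^{(1)})>\gamma\mid\mathcal{H}_0\big),
\]
since the two statistics are equal in distribution, indeed equal pathwise under the coupling. The same argument with $\mu\sigma_s^2$ in place of $\mu$ shows that $P_{\mathrm{fa}}$ does not depend on $\sigma_s^2$ either. A threshold designed at $\mu=1$ therefore yields the same false-alarm probability for every $\mu>0$, which is the CFAR claim.
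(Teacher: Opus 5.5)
Your proposal is correct and follows the same argument the paper gives inline in the lemma: under $\mathcal{H}_0$ the sample covariance at mismatch $\mu$ is (in law) $\mu$ times the matched one, so every eigenvalue scales by $\mu$, $\mu$ cancels in $\kappa$, and the null distribution of $\kappa$ (hence $P_{\mathrm{F}}$ at any fixed threshold) does not depend on $\mu$. Your coupling $\mathbf{Y}_{\mathrm{d}}=\sqrt{\mu}\,\mathbf{Z}$ and the observation that $\lambda_{\min}>0$ almost surely for $L\ge N_r$ make rigorous what the paper states informally as an identity between eigenvalues of two different random matrices.
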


\vspace{0.2cm}
\begin{remark}[Rank-One Target Response]
For a point-target model, $\mathbf{G}\mathbf{W}\mathbf{W}^H\mathbf{G}^H$ is rank-one, generating a single dominant eigenvalue in $\widehat{\boldsymbol{\Sigma}}$ while the remaining $N_r-1$ correspond to noise.  
Hence, for analytical tractability and without loss of generality, we focus on $N_r=2$ to capture the essential SCN behavior.
\end{remark}
\vspace{0.2cm}
\section{Detection Performance of the SCN $\kappa(\widehat{\boldsymbol{\Sigma}})$}
\label{sec:analytical_expression}
We now characterize the performance of the SCN detector by deriving closed-form expressions for its false-alarm and detection probabilities associated with the test statistic.
\vspace{0.2cm}
\begin{remark}[Effective SNR and Detection Adaptation]
Under $\mathcal{H}_1$, the disturbed covariance is 
\[\boldsymbol{\Sigma}_{\mathrm{d}}=\mu\sigma_s^2\mathbf{I}_{N_r}+\mathbf{G}\mathbf{W}\mathbf{W}^H\mathbf{G}^H,\]
yielding an \emph{effective SNR}
\begin{equation}\label{eq:eff_snr}
\gamma_e \triangleq \frac{\|\mathbf{G}\mathbf{W}\|_F^2}{\left(\mu\sigma_s^2\right)}.
\end{equation}
Thus, while CFAR holds under $\mathcal{H}_0$, the detection probability under $\mathcal{H}_1$ depends on $\gamma_e$, decreasing with stronger interference ($\mu>1$).  
This relationship reflects the natural information-theoretic trade-off between robustness and sensitivity.
\end{remark}

\vspace{0.2cm}
\begin{lemma}[Closed-Form False-Alarm Probability]\label{lemma:2}
Consider the SCN test statistic $\kappa(\widehat{\boldsymbol{\Sigma}})$ in~\eqref{eq:scn_def}, 
where $\widehat{\boldsymbol{\Sigma}} \!\sim\! \mathcal{CW}_2(L,\mathbf{I}_2)$ under $\mathcal{H}_0$.  
For $\tau>1$, the false-alarm probability \(P_{\mathrm{F}}(\tau)
\triangleq \Pr\!\left(\kappa(\widehat{\boldsymbol{\Sigma}})>\tau \mid \mathcal{H}_0\right)\) admits the closed-form expression
\begin{equation}\label{eq:cdf_integral}
\begin{split}
P_{\mathrm{F}}(\tau)
&= 1 - \!\left(\frac{2\,\Gamma\!\left(L+\tfrac{1}{2}\right)}
{\sqrt{\pi}\,\Gamma(L+1)} - 1\right)
\\[-2pt]
&\quad\times
\left(\frac{L(1-\tau)
+ 2\!\left[{}_2F_1(1,-L,L,-\tau)-1\right]}
{(4\tau)^{1-L}(1+\tau)^{2L-1}}\right),
\end{split}
\end{equation}
where ${}_2F_1(\cdot)$ denotes the Gaussian hypergeometric function and $\Gamma(\cdot)$ the Euler Gamma function.
\end{lemma}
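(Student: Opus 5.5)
Since $\kappa$ is scale invariant (Lemma~\ref{lemma:1}), I will take the scale matrix to be $\mathbf{I}_2$ and work with the complex central Wishart matrix $\mathbf{A}=L\widehat{\boldsymbol{\Sigma}}\sim\mathcal{CW}_2(L,\mathbf{I}_2)$. Its two ordered eigenvalues $\lambda_1>\lambda_2>0$ have the classical joint density (complex Laguerre ensemble, $m=2$, $n=L$)
\begin{equation*}
f(\lambda_1,\lambda_2)=K_L\,(\lambda_1\lambda_2)^{L-2}(\lambda_1-\lambda_2)^2 e^{-(\lambda_1+\lambda_2)},\qquad K_L^{-1}=\Gamma(L)\Gamma(L-1).
\end{equation*}
The plan is to obtain the distribution of the ratio $\kappa=\lambda_1/\lambda_2$ directly from this density, and then evaluate $P_{\mathrm F}(\tau)=1-\Pr(\kappa\le\tau)$.

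\textbf{Step 1 (density of the ratio).} Change variables to $\kappa=\lambda_1/\lambda_2$ and $s=\lambda_2$, so $\lambda_1=\kappa s$ with Jacobian $s$. This gives
\begin{equation*}
f_\kappa(\kappa)=K_L\,\kappa^{L-2}(\kappa-1)^2\int_0^\infty s^{2L-1}e^{-(1+\kappa)s}\,ds
=K_L\,\Gamma(2L)\,\frac{\kappa^{L-2}(\kappa-1)^2}{(1+\kappa)^{2L}},\qquad \kappa>1 .
\end{equation*}
The prefactor simplifies by the duplication formula,
\[
\Gamma(2L)=\frac{2^{2L-1}\Gamma(L)\Gamma(L+\tfrac12)}{\sqrt\pi}.
\]
This is the origin of the factor $\Gamma(L+\frac12)/(\sqrt\pi\,\Gamma(L+1))$ and of the powers of $4$ in \eqref{eq:cdf_integral}.

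\textbf{Step 2 (CDF).} Next compute
\[
F(\tau)=\int_1^\tau f_\kappa(t)\,dt .
\]
Substituting $u=t/(1+t)$ turns the integrand into incomplete-beta form $u^{L-2}(2u-1)^2(1-u)^{\ast}$, so $F$ is a combination of incomplete beta functions $B_{u}(a,b)$ evaluated between $u=\tfrac12$ and $u=\tau/(1+\tau)$. I will then rewrite these in hypergeometric form using
\[
B_x(a,b)=\frac{x^a}{a}\,{}_2F_1(a,1-b;a+1;x),
\]
together with a Pfaff transformation, which produces the argument $-\tau$. I will also use contiguous relations to collapse the three terms coming from expanding $(t-1)^2$ into the single combination
\[
L(1-\tau)+2\left[{}_2F_1(1,-L,L,-\tau)-1\right],
\]
multiplied by $(4\tau)^{L-1}(1+\tau)^{1-2L}$. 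The constant term, involving $2\Gamma(L+\tfrac12)/(\sqrt\pi\,\Gamma(L+1))-1$, should come from the lower limit $u=\tfrac12$, where the complete/half beta values reduce to central binomial quantities.

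\textbf{Checks and main obstacle.} As sanity checks I will verify $F(1)=0$ and $F(\infty)=1$, and differentiate the final expression to recover $f_\kappa$. Differentiation may in fact be the cleanest route to a rigorous proof: propose the closed form, check that its $\tau$-derivative equals $f_\kappa(\tau)$ using $\frac{d}{dz}{}_2F_1(a,b;c;z)=\frac{ab}{c}\,{}_2F_1(a+1,b+1;c+1;z)$, and confirm the boundary value at $\tau=1$.

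The main obstacle is the hypergeometric bookkeeping in Step 2. Matching the exact normalization and the specific ${}_2F_1(1,-L;L;-\tau)$ form, which is a terminating polynomial since $-L$ is a negative integer, is delicate. If the direct reduction stalls, I will fall back on finite-sum expansions for integer $L$ and then re-sum those into the stated form.
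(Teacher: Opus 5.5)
\textbf{Verdict: genuine gap. The formula you set out to match is wrong as printed, so Step~2 cannot be completed.}

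Your Step~1 is correct. With $K_L^{-1}=\Gamma(L)\Gamma(L-1)$ you get $f_\kappa(x)=\frac{\Gamma(2L)}{\Gamma(L)\Gamma(L-1)}\,\frac{x^{L-2}(x-1)^2}{(1+x)^{2L}}$ for $x>1$. Your route (Laguerre joint density, ratio substitution, incomplete Beta, Pfaff) is the same one the paper sketches. The step that fails is the final identification in Step~2: \eqref{eq:cdf_integral} is not the false-alarm probability, and the checks you list would expose this. Write $a=\frac{2\Gamma(L+\frac12)}{\sqrt{\pi}\,\Gamma(L+1)}$ and $S(\tau)=\frac{(4\tau)^{L-1}}{(1+\tau)^{2L-1}}\bigl[L(1-\tau)+2\bigl({}_2F_1(1,-L;L;-\tau)-1\bigr)\bigr]$, so the statement reads $P_{\mathrm F}=1-(a-1)S$.

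For $L=2$, your density integrates to $\Pr(\kappa\le\tau)=\bigl(\frac{\tau-1}{\tau+1}\bigr)^3$. On the other hand, $a=\frac34$ and ${}_2F_1(1,-2;2;-\tau)=1+\tau+\frac{\tau^2}{3}$ give $1-(a-1)S(\tau)=1+\frac{2\tau+\frac23\tau^3}{(1+\tau)^3}$. This tends to $\frac43$ as $\tau\to1^+$ and to $\frac53$ as $\tau\to\infty$. In general $a<1$ and $S\ge 1/a>0$ (see below), so the printed $P_{\mathrm F}$ exceeds $1$ for every $\tau>1$ and every $L\ge2$. Likewise $\frac{d}{d\tau}[(a-1)S]=\frac{a-1}{a}f_\kappa$, not $f_\kappa$, and $(a-1)S(1)=1-\frac1a\neq0$.

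What your computation actually yields is $\Pr(\kappa\le\tau)=a\,S(\tau)-1$, i.e.\ $P_{\mathrm F}(\tau)=2-a\,S(\tau)$. The $-1$ is additive, not part of a factor $(a-1)$. Your differentiation idea proves this cleanly:
\begin{itemize}
\item By duplication, $C_L/(a\,4^{L-1})=L(L-1)$, where $C_L=\Gamma(2L)/(\Gamma(L)\Gamma(L-1))$.
\item Writing $h(\tau)={}_2F_1(1,-L;L;-\tau)=\sum_k c_k\tau^k$, the condition $a\,S'(\tau)=f_\kappa(\tau)$ reduces to the first-order identity $(L-1-L\tau)h+\tau(1+\tau)h'=L-1$.
\item This identity holds coefficientwise because $c_k/c_{k-1}=\frac{L-k+1}{L+k-1}$; the $\tau^{L+1}$ terms cancel.
\item $S(0)=0$, and $x^{L-2}(x-1)^2(1+x)^{-2L}\,dx$ is invariant under $x\mapsto 1/x$, so $C_L$ times its integral over $(0,1)$ is exactly $1$.
\end{itemize}
Hence $a\,S(\tau)=C_L\int_0^\tau x^{L-2}(x-1)^2(1+x)^{-2L}\,dx=1+\Pr(\kappa\le\tau)$.

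In your incomplete-Beta version, this is exactly what the lower limit $u=\frac12$ contributes: an additive $-1$ (half of the complete Beta mass), not the constant $a-1$ you anticipated. So your proposal cannot establish the lemma as stated. The same method does establish the corrected formula $P_{\mathrm F}(\tau)=2-a\,S(\tau)$.
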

\vspace{0.2cm}
\begin{IEEEproof}
Under $\mathcal{H}_0$, $\widehat{\boldsymbol{\Sigma}}$ follows a central complex Wishart distribution 
$\mathcal{CW}_2(L,\mathbf{I}_2)$ whose eigenvalue ratio $\kappa$ is invariant to covariance scaling.  
The c.d.f.\ $F_\kappa(\tau;\mathcal{H}_0)$ can be expressed using the joint eigenvalue density of $\mathcal{CW}_2(L,\mathbf{I}_2)$~\cite{nafkaSCN2020}.  
Applying variable transformation and evaluating the integral via Euler’s Beta function and the Gauss hypergeometric identity yields~\eqref{eq:cdf_integral}, which holds for all $L\!\ge\!2$.
\end{IEEEproof}

\vspace{3cm}
\begin{theorem}[Closed-Form Detection Probability]\label{th:1}
Consider the SCN statistic $\kappa(\widehat{\boldsymbol{\Sigma}})$ in~\eqref{eq:scn_def}, 
where $\widehat{\boldsymbol{\Sigma}}\!\sim\!\mathcal{CW}_2(L,\mathbf{I}_2,\boldsymbol{\Omega})$ with non-centrality matrix 
$\boldsymbol{\Omega} = \mathbf{G}\mathbf{W}\mathbf{W}^H\mathbf{G}^H/(\mu\sigma_s^2)$ under $\mathcal{H}_1$.  
For $\tau>1$, the detection probability 
\[
P_{\mathrm{D}}(\gamma_e,\tau)
\triangleq \Pr\!\left(\kappa(\widehat{\boldsymbol{\Sigma}})> \tau \mid \mathcal{H}_1\right)
\]
is expressed in closed form as
\begin{equation}\label{eq:cdf_H1}
\begin{aligned}
&P_{\mathrm{D}}(\gamma_e,\tau)
= 1 - 
\frac{2\,\Gamma(2L-1)e^{-\omega_1/2}}
{\omega_1\,\Gamma(L-1)^2}
\sum_{k=0}^{L} 
\frac{2^{-k}(-L)_k}{(L-1)_k\,k!}
\\
&\times
\Big(
\Phi(L\!-\!2,\,L\!+\!k;\tau,\omega_1)
- \Phi(L\!-\!1,\,L\!+\!k\!-\!1;\tau,\omega_1)
\Big)
\end{aligned}
\end{equation}
where $\omega_1 = 2L\gamma_{e}$ and the auxiliary function is defined as
\begin{equation*}\label{eq:Phi_def}
\Phi(M,\delta;\tau,\omega_1)
= \sum_{m=0}^{M}
\frac{M!\left(1-\tau^{\,\delta+m}\right)
\mathrm{E}_{1-\delta-m}\!\left(\frac{-\omega_1}{2(1+\tau)}\right)}
{(-1)^{m}m!(M-m)!(1+\tau)^{\delta+m}}.
\end{equation*}
Here, $(a)_k$ denotes the Pochhammer symbol, and 
$\mathrm{E}_n(z)=\int_1^{\infty}t^{-n}e^{-zt}\,dt$ is the exponential integral of order $n$.
\end{theorem}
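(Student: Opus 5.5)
The plan is to work directly with the joint density of the two ordered eigenvalues $\lambda_1>\lambda_2>0$ of $L\widehat{\boldsymbol{\Sigma}}$ under the rank-one non-central complex Wishart model $\mathcal{CW}_2(L,\mathbf{I}_2,\boldsymbol{\Omega})$. Here $\boldsymbol{\Omega}$ has a single nonzero eigenvalue. With the normalisation adopted in the statement, this eigenvalue is identified with $\omega_1=2L\gamma_e$; this follows because $\operatorname{tr}\boldsymbol{\Omega}=\|\mathbf{G}\mathbf{W}\|_F^2/(\mu\sigma_s^2)=\gamma_e$ by~\eqref{eq:eff_snr}, and the extra factor $2L$ comes from accumulating the $L$ snapshots and from the real/complex scaling convention.

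\textbf{Step 1: eigenvalue density.} For rank-one non-centrality, the hypergeometric function ${}_0F_1(L;\boldsymbol{\Omega},\boldsymbol{\Lambda})$ of two matrix arguments reduces to a single-argument expression. The classical rank-one formula (James/Muirhead type) gives a joint density proportional to
\[
e^{-\omega_1/2}\,(\lambda_1\lambda_2)^{L-2}(\lambda_1-\lambda_2)\,e^{-(\lambda_1+\lambda_2)}\,
\frac{{}_0F_1(L-1;\omega_1\lambda_1/2)\,-\,{}_0F_1(L-1;\omega_1\lambda_2/2)}{\omega_1(\lambda_1-\lambda_2)/2},
\]
up to constants involving $\Gamma(L-1)^2$ and $\Gamma(2L-1)$. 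The factor $(\lambda_1-\lambda_2)$ from the Vandermonde determinant cancels against the divided difference, which explains the $1/\omega_1$ and $\Gamma(L-1)^2$ in the prefactor of~\eqref{eq:cdf_H1}. Where convenient, I would expand ${}_0F_1$ as a finite Laguerre-type sum. The finite sum $\sum_{k=0}^{L}2^{-k}(-L)_k/((L-1)_k k!)$ in the theorem suggests writing the density as a terminating ${}_1F_1(-L;L-1;\cdot)$ polynomial, via Kummer's transformation $e^{x}{}_1F_1(a;b;-x)={}_1F_1(b-a;b;x)$, times a single exponential.

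\textbf{Step 2: reduce to the ratio.} Substitute $\lambda_2=t$ and $\lambda_1=\kappa t$ with $\kappa>1$, so that $d\lambda_1\,d\lambda_2=t\,dt\,d\kappa$. Then $1-P_{\mathrm D}=\Pr(\kappa\le\tau)=\int_1^\tau\int_0^\infty(\cdots)\,dt\,d\kappa$. For each term of the finite $k$-sum, the inner $t$-integral is a gamma integral of the form $\int_0^\infty t^{a}e^{-(1+\kappa)t}e^{c\,t}\,dt$, where $c$ comes from the exponential factor in the Kummer-transformed ${}_1F_1$. This produces powers $(1+\kappa)^{-(\delta+m)}$. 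The two terms of the divided difference, one depending on $\lambda_1$ and one on $\lambda_2$, produce the two $\Phi$'s, with parameter pairs $(L-2,L+k)$ and $(L-1,L+k-1)$. In each case the binomial expansion of $(\lambda_1\lambda_2)^{L-2}$ times the extra $\lambda_i$ (equivalently, of $\kappa^{M}$ written as $((1+\kappa)-1)^M$) gives the inner $m$-sum with coefficients $M!/((-1)^m m!(M-m)!)$.

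\textbf{Step 3: outer integral.} What remains is an integral over $\kappa\in[1,\tau]$ of terms $(1+\kappa)^{-(\delta+m)}\exp\!\big(\tfrac{\omega_1}{2}\cdot\tfrac{\cdot}{1+\kappa}\big)$. The substitution $u=(1+\kappa)$, followed by $s=u/(1+\tau)$ or its reciprocal, converts each term into a generalized exponential integral $\mathrm{E}_{n}(z)=\int_1^\infty t^{-n}e^{-zt}dt$ with negative argument $z=-\omega_1/(2(1+\tau))$ and order $1-\delta-m$. The factor $(1-\tau^{\delta+m})$ arises from combining the two endpoints, using the scaling relation $\mathrm{E}_n(z)$ versus $\mathrm{E}_n(z/x)\,x^{n-1}$ between the bounds $u=2$ and $u=1+\tau$, together with the symmetry $\kappa\mapsto1/\kappa$ that maps the lower endpoint onto the form of the upper one.

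\textbf{Main obstacle.} I expect the main difficulty to be this last step: getting both endpoints to collapse into the single argument $\omega_1/(2(1+\tau))$ with the factor $(1-\tau^{\delta+m})$. Here I would first try the $\kappa\mapsto1/\kappa$ symmetric rewriting of the integral over $[1,\tau]$ as half an integral over $[1/\tau,\tau]$. Alongside that, I would carefully track the constants, namely $\Gamma(2L-1)/\Gamma(L-1)^2$ and the factor $2$. As sanity checks, setting $\omega_1\to0$ should recover Lemma~\ref{lemma:2}, and $\tau\to1$ should give $P_{\mathrm D}=1$.
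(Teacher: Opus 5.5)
Your outline (reduce to the density of $\kappa$, Kummer-transform to a terminating series, substitute, expand binomially, integrate termwise into $\mathrm{E}_n$) follows the paper's proof. The one difference is that the paper imports the ratio density $f_\kappa(x\mid\mathcal{H}_1)$ from the SCN literature, whereas you derive it from the joint eigenvalue density. Steps~1--2 contain three errors.

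(i) The complex Wishart eigenvalue density carries $(\lambda_1-\lambda_2)^2$, not $(\lambda_1-\lambda_2)$. One power cancels the divided difference and the other survives as $t(\kappa-1)$. That surviving factor is what produces $(x-1)x^{L-2}$ and the power $t^{2L-2}$, hence $\Gamma(2L-1)$, ${}_1F_1(2L-1;L-1;\cdot)$ and, after Kummer, ${}_1F_1(-L;L-1;\cdot)$ with $k$ running to $L$. With a single power you would get $\Gamma(2L-2)$, a sum stopping at $k=L-1$, and no $(\kappa-1)$.

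(ii) Kummer's transformation cannot be applied before the $t$-integration. The factor ${}_0F_1(L-1;\omega_1\lambda_i/2)$ is a Bessel function with no terminating ``polynomial times exponential'' form. Moreover, $\int_0^\infty t^{a}e^{-(1+\kappa)t}e^{ct}\,dt\propto(1+\kappa-c)^{-a-1}$, not $(1+\kappa)^{-(\delta+m)}e^{(\cdot)/(1+\kappa)}$. You must first integrate the ${}_0F_1$ series termwise:
\[
\int_0^\infty t^{2L-2}e^{-(1+\kappa)t}\,{}_0F_1\big(L-1;\tfrac{\omega_1\kappa t}{2}\big)\,dt=\frac{\Gamma(2L-1)}{(1+\kappa)^{2L-1}}\,{}_1F_1\Big(2L-1;L-1;\tfrac{\omega_1\kappa}{2(1+\kappa)}\Big).
\]
This is precisely the paper's $f_\kappa$, and only then does Kummer apply.

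(iii) The two $\Phi$'s do not come from the two terms of the divided difference. Put $y=\kappa/(1+\kappa)$ in the $\lambda_1$-term and $y=1/(1+\kappa)$ in the $\lambda_2$-term; this is where your $\kappa\mapsto1/\kappa$ idea belongs, and it is what the paper's two substitutions do. The two terms then merge into
\[
\mathcal{I}_k=\int_{1/(1+\tau)}^{\tau/(1+\tau)}(y(1-y))^{L-2}(2y-1)\,e^{\omega_1y/2}\big(-\tfrac{\omega_1y}{2}\big)^k\,dy,
\]
with the $\kappa=1$ endpoints cancelling. The split into $\Phi(L-2,L+k)$ and $\Phi(L-1,L+k-1)$ comes from $2y-1=y-(1-y)$, i.e.\ from the surviving factor $(\kappa-1)$.

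More seriously, the collapse you flag as the main obstacle cannot be achieved, because the target identity is false. With $a=1/(1+\tau)$,
\[
\int_a^{\tau a}y^{\delta+m-1}e^{\omega_1y/2}\,dy=a^{\delta+m}\Big[\mathrm{E}_{1-\delta-m}\big(-\tfrac{\omega_1}{2(1+\tau)}\big)-\tau^{\delta+m}\,\mathrm{E}_{1-\delta-m}\big(-\tfrac{\omega_1\tau}{2(1+\tau)}\big)\Big].
\]
The scaling relation necessarily moves the argument at the upper endpoint. Since $\mathrm{E}_n(\tau z)\neq\mathrm{E}_n(z)$, no symmetry turns this into $(1-\tau^{\delta+m})\,\mathrm{E}_{1-\delta-m}\big(-\tfrac{\omega_1}{2(1+\tau)}\big)$. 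Likewise, the Kummer series contributes $(-\omega_1/2)^k$, not the bare $2^{-k}$ in the statement.

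Your own $\omega_1\to0$ check exposes both discrepancies. In the two-argument form the poles of $\mathrm{E}_{1-\delta-m}$ at $z=0$ cancel and the limit is finite. In the displayed single-argument form (or with $\omega_1^k$ dropped) they do not, and the expression diverges.

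The paper's appendix does not carry out this simplification either; it only says ``substituting back and simplifying.'' Executed correctly, your route proves the two-argument version of $\Phi$ with the extra factor $(-\omega_1)^k$, not the formula as stated.
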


\begin{IEEEproof}
See Appendix~A.
\end{IEEEproof}



\begin{remark}[Analytical Novelty and Broader Significance]
To the best of our knowledge, the closed-form expression for $P_{\mathrm{D}}(\gamma_e,\tau)$ in~\eqref{eq:cdf_H1} represents the first analytical characterization of the SCN statistic under $\mathcal{H}_1$.  
Unlike prior eigenvalue-based or LRT analyses that rely on asymptotic or simulation-based approximations, it provides an exact finite-sample formulation derived from the non-central Wishart model.  
The expression captures the joint effects of signal rank, effective SNR, and covariance mismatch, thereby unifying ideal and disturbed noise conditions within a single framework.  

Beyond ISAC, this formulation applies to a wide range of communication-theoretic detection and sensing problems such as cognitive radio, radar target detection, and MIMO spectrum monitoring where robustness to noise uncertainty is critical.  
Hence, the derived $P_{\mathrm{D}}$ extends the theoretical foundation of ratio-based detection and establishes the SCN as a universal CFAR-compliant detector for modern sensing-communication networks.
\end{remark}

\section{ISAC Power Allocation}
\label{sec:power_allocation}

The SCN detector, established as a CFAR–robust sensing method under covariance mismatch, is now employed to analyze the impact of noise uncertainty on ISAC performance.  
The focus here is on \textit{detection–theoretic characterization} rather than beamforming or waveform optimization, linking sensing reliability to communication throughput through a compact, analytically tractable framework.

The total transmit power \(P\) is partitioned between communication and sensing as  
\(
\|\mathbf{W}_\mathrm{c}\|^2 = \eta P\), and \(\|\mathbf{w}_\mathrm{s}\|^2 = (1-\eta)P,
\)
where \(0\!\le\!\eta\!\le\!1\) denotes the power–splitting ratio.  
The communication subsystem achieves ergodic rate \(\bar R(\eta)\) from~\eqref{eq:ergodic_rate}, while the sensing subsystem attains detection probabilities \(P_{\mathrm{D}}(\gamma_e,\tau)\) and \(P_{\mathrm{F}}(\tau)\), where the effective SNR scales as \(\gamma_e \!\propto\! (1-\eta)P/(\mu\sigma_s^2)\).

\subsection{Optimization Formulation}

The sensing reliability is quantified by the total error probability
\(
P_\mathrm{E}(\eta,\tau)
= \tfrac{1}{2}\!\left[P_\mathrm{F}(\tau) + 1 - P_\mathrm{D}(\gamma_e(\eta),\tau)\right],
\)
which balances false alarms and missed detections using analytical results from Lemma~\ref{lemma:1} and Theorem~\ref{th:1}.  
Coupling sensing and communication through a minimum-rate constraint yields the ISAC power–allocation problem:
\begin{subequations}\label{prob:ISAC_main}
\begin{align}
\min_{\eta,\,\tau} \quad 
& P_\mathrm{E}(\eta,\tau), \label{prob:ISAC_main_obj}\\
\text{s.t.} \quad
& \bar R(\eta) \ge R_{\min}, \label{prob:ISAC_main_rate}\\
& 0 \le \eta \le 1,\;\; \tau > 1. \label{prob:ISAC_main_bounds}
\end{align}
\end{subequations}
Problem~\eqref{prob:ISAC_main} is low-dimensional and efficiently solved via nested 1-D searches, leveraging the monotonicity of \(P_{\mathrm{F}}(\tau)\) and \(P_{\mathrm{D}}(\gamma_e,\tau)\).  
The mismatch factor \(\mu\) explicitly modulates the sensing SNR, thus governing the balance between sensing reliability and communication rate.

\subsection{Solution Structure}

The optimization naturally decouples: constraint~\eqref{prob:ISAC_main_rate} depends only on communication power, while \(P_\mathrm{E}(\eta,\tau)\) depends on sensing power and threshold.  
Hence, a sequential approach is adopted: 
\begin{itemize}
    \item \textit{Step~1 (Minimum Communication Power):}  
From~\eqref{eq:ergodic_rate}, \(\bar R(\eta)\) increases monotonically with \(P_\mathrm{c}=\eta P\).  
The minimum feasible communication power \(P_\mathrm{c}^{\min}\) satisfies \(\bar R(P_\mathrm{c}^{\min})=R_{\min}\).  
If \(P_\mathrm{c}^{\min}>P\), the ISAC task is infeasible.
\item \textit{Step~2 (Sensing Power Allocation):}  
For feasible cases, the remaining power \(P_\mathrm{s}=P-P_\mathrm{c}^{\min}\) yields effective SNR  
\(\gamma_e=P_\mathrm{s}\|\mathbf{G}\|_F^2/(\mu\sigma_s^2)\).  
The sensing subsystem then minimizes \(P_\mathrm{E}(\eta,\tau)\) for this \((\gamma_e,\mu)\).
\item \textit{Step~3 (Threshold Optimization):}  
For fixed \(\gamma_e\), \(P_\mathrm{E}(\eta,\tau)\) depends only on \(\tau\).  
Since \(P_\mathrm{F}(\tau)\) decreases and \(1-P_\mathrm{D}(\gamma_e,\tau)\) increases with $\tau$, the optimal threshold  
\(\tau^\star=\arg\min_{\tau>1}P_\mathrm{E}(\eta,\tau)\)  
is found via a simple one-dimensional line search.
\end{itemize}
This formulation preserves analytical tractability while capturing the essential trade-off between rate, power, and sensing reliability under realistic ISAC noise–uncertainty conditions.

\section{Numerical Results} \label{s:num}
\begin{figure*}[t]
    \centering
    \subfloat[SCN ROC under noise uncertainty]{
        \includegraphics[width=0.315\textwidth]{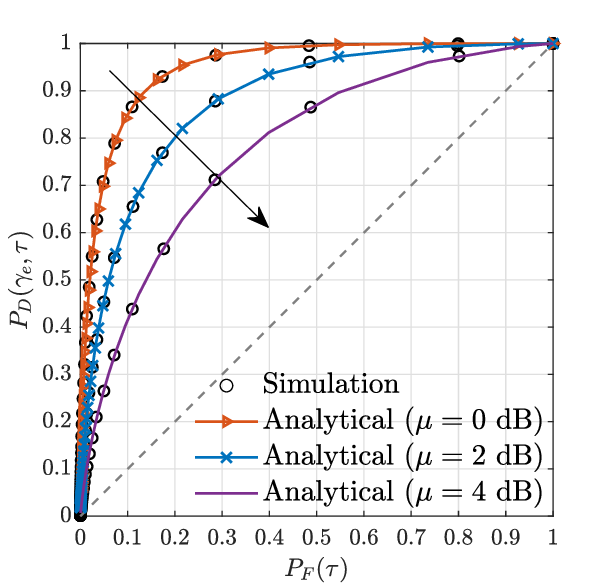}
        \label{fig:rocscn}
    }
    \hfill
    \subfloat[Error and CFAR behavior vs.\ $\mu$]{
        \includegraphics[width=0.315\textwidth]{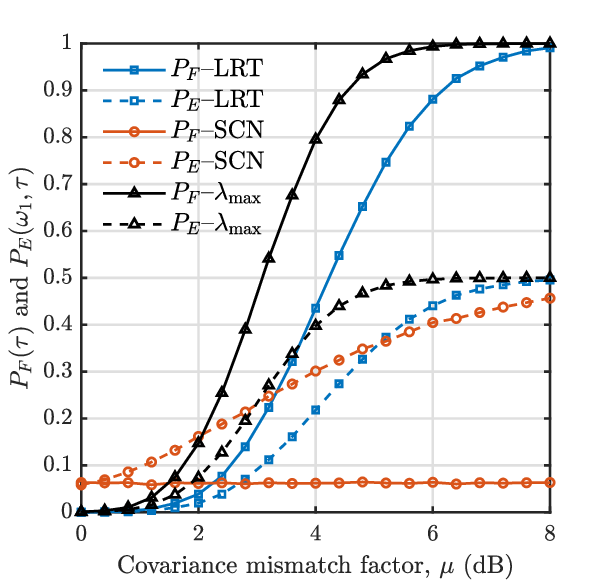}
        \label{fig:pfavsmu}
    }
    \hfill
    \subfloat[Total error vs.\ threshold $\tau$]{
        \includegraphics[width=0.315\textwidth]{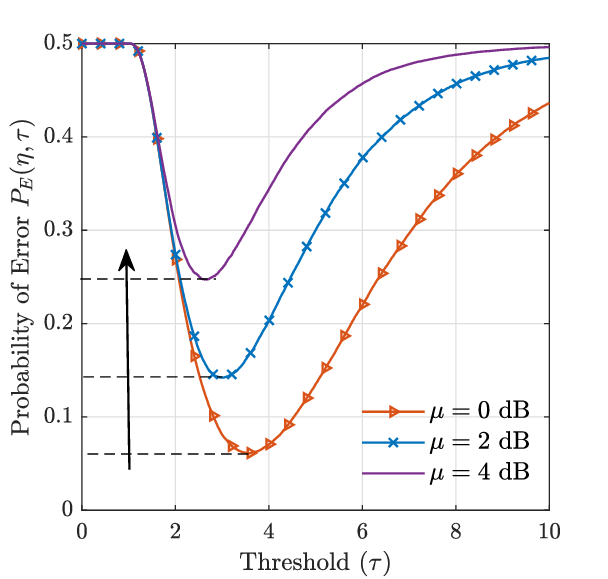}
        \label{fig:PFvTAU}
    }\vspace{-3mm}
    \caption{Performance of SCN detector, illustrating analytical–simulation agreement and CFAR robustness over benchmarks. \vspace{-6mm}}
    \label{fig:detecperf}
\end{figure*}

\begin{figure*}[t]
    \centering
    \subfloat[Achievable rate versus transmit power]{%
        \includegraphics[width=0.33\textwidth]{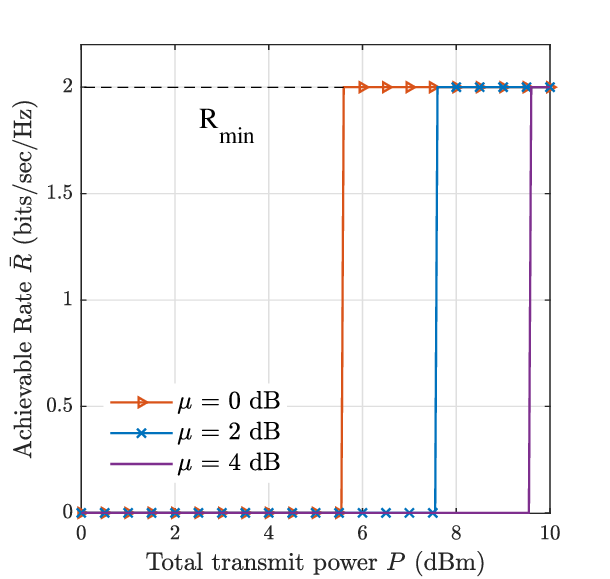}%
        \label{fig:Rvspower}} \hfill
    \subfloat[False–alarm probability versus power]{%
        \includegraphics[width=0.33\textwidth]{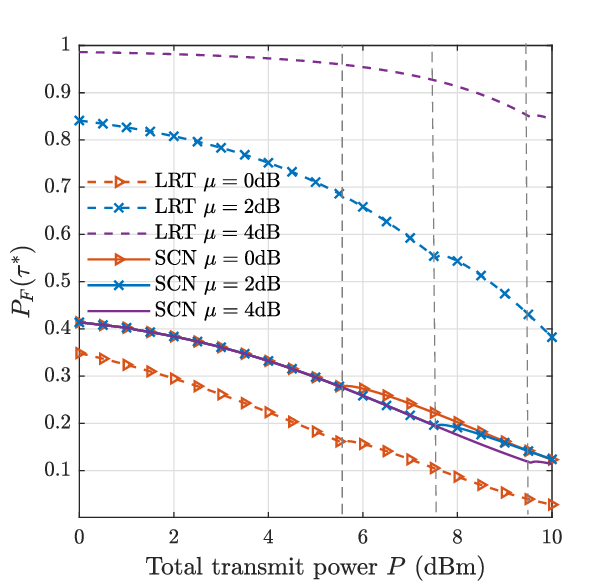}%
        \label{fig:PFvsP}} \hfill
    \subfloat[Total error probability versus power]{%
        \includegraphics[width=0.33\textwidth]{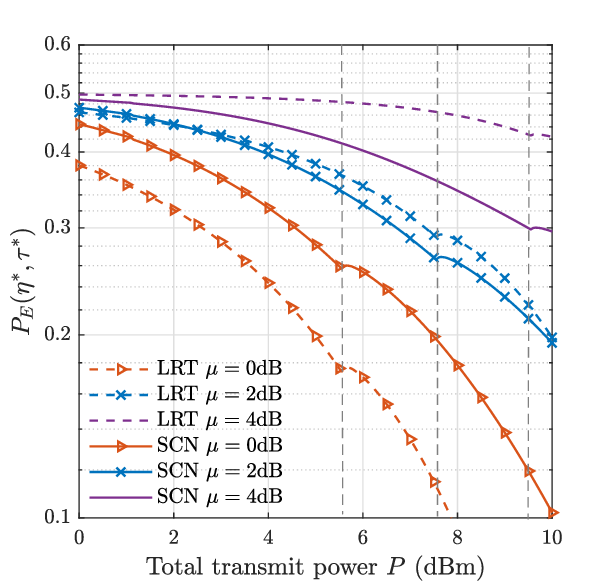}%
        \label{fig:PEvsP}}
    \caption{ISAC performance under noise uncertainty: (a) rate–power trade-off, 
    (b) CFAR robustness of SCN detector, and (c) total error comparison with LRT.\vspace{-3mm}}
    \label{fig:SenvsP}
    \vspace{-3mm}
\end{figure*}

A monostatic ISAC BS with $N_t\!=\!4$ transmit and $N_r\!=\!2$ receive antennas serves a single $N_u\!=\!4$-antenna user. 
Transmit power ranges from $0$–$10$~dBm with noise floors $\sigma_c^2\!=\!\sigma_s^2\!=\!-105$~dBm. 
The sensing channel gain $\beta$ accounts for RCS and path loss at $\theta\!=\!\pi/4$. 
All results are averaged over $10^5$ fading realizations; power and rate are expressed in dBm and bits/s/Hz.


Fig.~\ref{fig:rocscn} presents ROC curves for noise uncertainty levels $\mu\!\in\!\{0,2,4\}$~dB with $P\!=\!10$ and $\eta\!=\!0.5$. 
Analytical results from \eqref{eq:cdf_integral} and \eqref{eq:cdf_H1} closely match Monte Carlo simulations, confirming the model accuracy. 
Detection performance degrades as $\mu$ increases, where $\mu\!=\!0$~dB provides the nominal bound and $\mu\!=\!4$~dB exhibits reduced $P_\mathrm{D}$ and inflated $P_\mathrm{F}$, illustrating the sensitivity of ISAC sensing to covariance mismatch.
Fig.~\ref{fig:pfavsmu} compares SCN, LRT, and $\lambda_{\max}$ detectors under covariance mismatch $\mu$. 
For $\mu\!<\!2$~dB, $\lambda_{\max}$ yields the lowest $P_\mathrm{E}$, while for $\mu\!>\!3$~dB the SCN outperforms all, maintaining $P_\mathrm{F}\!\approx\!0.05$ regardless of $\mu$. 
Thus, SCN ensures CFAR robustness as LRT and $\lambda_{\max}$ degrade toward $P_\mathrm{E}\!\to\!0.5$ under severe mismatch.
Fig.~\ref{fig:PFvTAU} shows $P_\mathrm{E}(\eta,\tau)$ versus threshold $\tau$ for different $\mu$. 
As $\mu$ increases, curves shift upward, raising the minimum error from $P_{\mathrm{E},\min}\!\approx\!0.05$, $0.25$, and $0.45$ for $\mu\!=\!0$, $2$, and $4$~dB, with corresponding optima $\tau^*\!\approx\!5.2$, $4.8$, and $4.1$. 
These minima define the optimal $\tau^*$ minimizing $P_\mathrm{E}(\eta,\tau)$ under mismatch, validating the optimization in~\eqref{prob:ISAC_main}.

Fig.~\ref{fig:Rvspower} illustrates power allocation under the rate constraint $R_{\min}$. 
For $P_\mathrm{c}\!<\!P^*$, communication is infeasible and all power is assigned to sensing ($\eta\!=\!0$). 
Once $P_\mathrm{c}\!\ge\!P^*$, the system enters a trade-off region where excess power supports sensing while maintaining $R_{\min}$. 
As noise uncertainty $\mu$ increases, the required $P^*$ rises from $5.6$ to $7.6$ and $9.6$~dBm for $\mu\!=\!0$, $2$, and $4$~dB, respectively. 
Beyond $P^*$, $\bar{R}$ saturates at $R_{\min}$, and remaining power is adaptively redistributed via $\eta(\mu)$. 
This demonstrates efficient rate preservation and uncertainty-aware sensing–communication balance.
Fig.~\ref{fig:PFvsP} verifies the CFAR behavior of the SCN detector: 
$P_{\mathrm{F}}$ curves for $\mu\!\in\!\{0,2,4\}$~dB nearly overlap, confirming robustness to noise mismatch under $\mathcal{H}_0$. 
The minor discontinuities at threshold powers $P^*\!\in\!\{5.6,7.6,9.6\}$~dBm arise from the transition of $\eta$ from $0$ to the sensing–communication trade-off regime, reflecting power reallocation rather than noise uncertainty. 
In contrast, the LRT exhibits pronounced $P_{\mathrm{F}}$ degradation as $\mu$ increases, underscoring the SCN’s CFAR advantage.
Fig.~\ref{fig:PEvsP} illustrates the total error $P_\mathrm{E}$ versus transmit power. 
For $\mu\!=\!0$~dB, the LRT attains slightly lower $P_\mathrm{E}$ owing to perfect noise knowledge. 
However, as $\mu$ increases, SCN consistently outperforms LRT, maintaining minimal $P_\mathrm{E}$ across the entire power range. 
The widening performance gap with higher $\mu$ reveals LRT’s sensitivity to covariance mismatch and establishes SCN as a robust detector for practical ISAC environments with noise uncertainty.

\section{Conclusion}
\label{sec:conclusion}
This paper developed a unified analytical and optimization framework for \emph{Standard Condition Number} (SCN)–based detection in MIMO ISAC systems under noise uncertainty.  
Closed-form expressions for the false-alarm and detection probabilities were derived using random matrix theory, capturing the complete finite-sample behavior of SCN and revealing its invariance to covariance scaling.  
The analysis proved that the SCN inherently satisfies the \emph{constant false alarm rate} (CFAR) property, while the effective SNR degrades only as $\gamma_{e}=\gamma/\mu$ under covariance mismatch.  
Building on this characterization, a tractable ISAC power-allocation framework was formulated to minimize total detection error subject to communication rate and power constraints.  
Numerical results showed excellent agreement with theory and confirmed that SCN maintains $P_{\mathrm{F}}\!\approx\!0.05$ across $\mu\!\in\![0,4]$~dB while achieving up to $35\%$ lower total error than the LRT and $\lambda_{\max}$ detectors in strong interference conditions.  
From a communication-theoretic perspective, these results establish SCN detection as a robust and low-complexity alternative for joint sensing–communication design, offering reliability guarantees even in non-stationary and jamming-prone networks.  
Future work will extend the framework to distributed multi-target scenarios and adaptive ISAC resource allocation under real-time uncertainty.

\section*{Appendix A \\ Proof of Theorem~\ref{th:1}}
Under $\mathcal{H}_1$, the detection probability of the SCN statistic is expressed as 
$P_{\mathrm{D}}(\gamma_e,\tau)=1-\int_{1}^{\tau} f_{\kappa}(x\!\mid\!\mathcal{H}_1)\mathrm{d}x$, 
where the p.d.f.\ of the eigenvalue ratio $\kappa$ for $\widehat{\boldsymbol{\Sigma}}\!\sim\!\mathcal{CW}_2(L,\mathbf{I}_2,\boldsymbol{\Omega})$ can be written with the aid of~\cite{nafkaSCN2020} as  
\begin{equation}
\label{eq:fkappa_appendix}
\begin{split}
f_{\kappa}(x\!\mid\!\mathcal{H}_1)
&=\Psi_L(\omega_1)\frac{(x-1)x^{L-2}}{(x+1)^{2L-1}}
\\[-2pt]
&\quad\times\!\Big[\mathcal{F}_L\!\Big(\tfrac{\omega_1 x}{2(x+1)}\Big)-\mathcal{F}_L\!\Big(\tfrac{\omega_1}{2(x+1)}\Big)\Big],
\end{split}
\end{equation}
where $\omega_1=2L\gamma_e$ and 
\[\Psi_L(\omega_1)=\tfrac{2\,\Gamma(2L-1)e^{-\omega_1/2}}{\omega_1\Gamma(L-1)^2};\, \mathcal{F}_L(z)={}_1F_1(2L-1,L-1,z).\]  

Applying identity ${}_1F_1(a,b,z)\!=\!e^{z}{}_1F_1(b-a,b,-z)$ and the finite series expansion ${}_1F_1(-L,b,z)=\sum_{k=0}^{L}\frac{(-L)_k z^k}{(b)_k k!}$~\cite{gradshteyn2007book} yields 
\[\int_{1}^{\tau}\!f_{\kappa}(x\!\mid\!\mathcal{H}_1)\mathrm{d}x
=\Psi_L(\omega_1)\!\sum_{k=0}^{L}\frac{(-L)_k}{(L-1)_k k!}\mathcal{I}_k(\omega_1,L,\tau),\]
where
\begin{align*}
\mathcal{I}_k(\omega_1,L,\tau)
&=\!\int_{1}^{\tau}\!\!\frac{x^{L-1}-x^{L-2}}{(x+1)^{2L-1}}\!\Big[
e^{\frac{\omega_1 x}{2(x+1)}}\!\Big(\!\tfrac{-\omega_1 x}{2(x+1)}\!\Big)^k\\[-2pt]
&\qquad\qquad \qquad - e^{\frac{\omega_1}{2(x+1)}}\!\Big(\!\tfrac{-\omega_1}{2(x+1)}\!\Big)^k\Big]\mathrm{d}x .
\end{align*}
By substituting $y=x/(x+1)$ and $y=1/(x+1)$ for the two exponential terms, 
$\mathcal{I}_k$ is expressed as a combination of integrals of the form 
$\int e^{\frac{\omega_1 y}{2}}(-\omega_1 y/2)^k(y(1-y))^{L-2}(2y-1)\mathrm{d}y$.
Expanding $(1-y)^{L-2}$ via the binomial theorem and integrating termwise leads to 
closed forms involving $\mathrm{E}_n(z)=\int_{1}^{\infty}t^{-n}e^{-zt}\mathrm{d}t$~~\cite{gradshteyn2007book}, 
yielding the auxiliary function $\Phi(M,\delta;\tau,\omega_1)$ in Theorem~\ref{th:1}.  
Substituting back and simplifying gives the final result in~\eqref{eq:cdf_H1}, 
thus completing the proof. \hfill$\blacksquare$

\end{document}